\newtheorem{lemma}{Lemma}
\newcommand{\mybf}[1]{{\bf #1}}
\def\eps{\varepsilon}
\def\tn{\textnormal}
\newcommand{\dfn}{\stackrel{\tn{def}}{=}}
\def\p2p{point-to-point}
\def\protLength{n}
\def\blockLength{m}
\def\nblocks{p}
\def\Atrans{A}
\def\Btrans{B}
\def\BSCprob{\varepsilon}
\def\CBSC{C_{\mathrm{BSC}}}
\def\Rb{R_{{b}}}
\begin{document}
\title{
	Shannon Capacity is Achievable for Binary Interactive First-Order Markovian Protocols}
\author{Assaf Ben-Yishai, Ofer Shayevitz and Young-Han Kim 
	\thanks{A. Ben-Yishai and O. Shayevitz are with the Department of EE--Systems, Tel Aviv University, Tel Aviv, Israel. 
		Y.-H.~Kim is with the Department of Electrical and Computer Engineering, University of California, San Diego, La Jolla, CA 92093 USA.
		Emails: \{assafbster@gmail.com, ofersha@eng.tau.ac.il, yhk@ucsd.edu\}. The work of A. Ben-Yishai was partially supported by an ISF grant no. 1367/14. The work of O. Shayevitz was supported by an ERC grant no. 639573, a CIG grant no. 631983, and an ISF grant no. 1367/14.}}

\maketitle

\begin{abstract}
We address the problem of simulating an arbitrary binary interactive first-order Markovian protocol over a pair of binary symmetric channels with crossover probability $\varepsilon$. We are interested in the achievable rates of reliable simulation, i.e., in characterizing the smallest possible blowup in communications such that a vanishing error probability (in the protocol length) can be attained. Whereas for general interactive protocols the output of each party may depend on all previous outputs of its counterpart, in a (first-order) Markovian protocol this dependence is limited to the last observed output only. 
Previous works in the field discuss broader families of protocols but assess the achievable rates only at the limit where $\varepsilon\to0$.

In this paper, we prove that the one-way Shannon capacity, $1-h(\varepsilon)$, can be achieved for any binary first-order Markovian protocol.
This surprising result, is to the best of our knowledge, the first example in which non-trivial interactive protocol can be simulated in the Shannon capacity. We give two capacity achieving coding schemes, which both divide the protocol into independent blocks, and implement vertical block coding. The first scheme is based on a random separation into blocks with variable lengths. The second scheme is based on a deterministic separation into blocks, and efficiently predicting their last transmission. The prediction can be regarded as a binary \textit{pointer jumping game}, for which we show that the final step can be calculated with $O(\log m)$ bits, where $m$ is the number of rounds in the game.
We conclude the work by discussing possible extensions of the result to higher order models.
\end{abstract}

\section{Introduction} 
Suppose Alice and Bob wish to communicate using some interactive communication protocol, where at time point $i$ Alice sends the bit $A_i$ and Bob, after having observed Alice's transmission, replies with the bit $B_i$. The transcript associated with this protocol is therefore 
\begin{align}
A_1,B_1,A_2,B_2,\cdots,A_n,B_n
\end{align}
where 
\begin{align}\label{eq:FullInteraction}
A_i = f_i\left(\mybf{B}^{i-1}\right);\quad B_i =g_i\left(\mybf{A}^{i}\right). 
\end{align}
The protocol consists of the sets of Alices transmission functions $\mybf{f}^n\dfn\{f_1(\cdot),\ldots,f_n(\cdot) \}$ and Bob's transmission functions $\mybf{g}^n\dfn\{g_1(\cdot),\ldots,g_n(\cdot) \}$.
 
Now, suppose that Alice and Bob are connected by a pair of Binary symmetric channels (BSC's) with a crossover probability $\eps$. Namely, the channel from Alice to Bob is
\begin{align}
Y_j=X_j+Z_j
\end{align}
and the channel from Bob to Alice is 
\begin{align}
V_j=U_j+W_j.
\end{align}
The symbol ``$+$" denotes addition over $\mathbb{GF}(2)$ and $Z_j$, $W_j$ 
are mutually independent Bernoulli i.i.d. sequence with $\Pr(Z_j=1)=\Pr(W_i=1)=\eps$.
Alice and Bob would like to devise a coding scheme that will enable a reliable simulation of the original transcript over the noisy BSC's. Reliable simulation in this context means that for any protocol, the probability of either Alice or Bob making an error in recovering the original transcript goes to zero with the transcript length.
The coding scheme would comprise Alice's and Bob's transmission policies. Namely, the coding scheme will determine Alice's and Bob's transmissions ($X_j$ and $U_j$ respectively)  which depend on their transmission
functions ($\mybf{f}^n$ and $\mybf{g}^n$ respectively) and their corresponding sets of previously received inputs. Since the transmissions can depend on the received inputs, the coding scheme itself might also be interactive. It is important to note that while in the original transcript the order of speakers was alternating on a symbol basis (Alice, Bob, Alice, Bob etc.), this is not necessarily the case with the order of transmissions over BSC. Specifically, Alice or Bob can use the channel for several consecutive channel uses while the other party is silent. We count the total number of channel uses on either channel and denote it by $\tilde{n}$. 

The {rate} of any communication scheme that attempts to simulate the original transcript is therefore
\begin{align}\label{eq:ratedef}
R = \frac{2n}{\tilde{n}}
\end{align}
where $2n$ is the length of the original transcript. As usual, one is interested in characterizing the \textit{capacity}, namely the maximal rate for which reliable simulation is possible.   

The problem described above was originally introduced and studied by Schulman \cite{schulman1996coding}. In this seminal work, he showed that reliable simulation with a positive rate (i.e., a positive capacity) can be achieved for any $\eps\neq 1/2$. Kol and Raz \cite{kol2013interactive} further studied the problem in the limit of $\eps\to 0$ and introduced a scheme achieving a rate of $1-O(\sqrt{h(\eps)})$ (where $h(\cdot)$ denotes the binary entropy function). They also showed that for a larger class of protocols with non-alternating rounds the rate is upper bounded by $1-\Omega(\sqrt{h(\eps)})$. These results demonstrate a separation between one-way and interactive communications, as the one-way capacity is given by $1-h(\eps)$. In \cite{haeupler2014interactive}, Haeupler 
examined a more flexible channel model than ours, in which at every time slot Alice and Bob can independently decide if they want to use the channel as a transmitter or as a receiver. This flexibility can potentially lead to collisions, but was shown to eventually increase the achievable rate to $1-O(\sqrt{\eps})$. Haeupler also conjectured that this rate is order-wise tight under adaptive transmission order, i.e., that the rate of any such reliable scheme is upper bounded by $1-\Omega(\sqrt{\eps})$. We note that the general problem of exactly determining the capacity for any fixed $\eps$ in the interactive setup is still wide open. 

In order to better understand the gap between the one-way and interactive setups for $\eps\to 0$, Haeupler and Velingker \cite{haeupler2017bridging} considered a more restrictive family of protocols that are ``less interactive'', where Alice and Bob have some limited average lookahead, i.e., can often speak for a while, without requiring further input from their counterpart (hence, can use short error correcting codes). They showed (also for adversarial noise) that when this average lookahead is $\textrm{poly}(1/\eps)$ then the capacity is $1-O(h(\eps))$, i.e., is order-wise the same as the one-way capacity. 

In this study, rather than restricting the ``interactiveness'' of the protocol as above, we restrict the \textit{memory} of the protocol. Specifically, we consider \textit{Markovian protocols} for which the lookahead can be as short as one (i.e. highly interactive), but where Alice and Bob need only recall the last bit they have received. 
For these Markovian Protocols, we show that reliable simulation is possible at any rate smaller than Shannon capacity. This surprising result is, to the best of our knowledge, the first example in which Shannon capacity is achievable in a non-trivial interactive setup. This result closes the question proposed in our previous study \cite{MarkovInteractionAllerton2017} in which only achievable simulation rates where given. 

\section{Markovian Protocols}
A (first-order) Markovian protocol is a protocol in which each party needs to know only the last transmission of its counterpart in order to decide what to send next, and not the entire set of past transmissions. Namely, 
\begin{align}
A_i =f_i(B_{i-1});\quad B_i =g_i(A_{i}).
\end{align}
where now, in contrast to \eqref{eq:FullInteraction}, the transmission functions  $f_i(\cdot)$, $g_i(\cdot)$ depend only on what was last received ($B_{i-1}$ and $A_{i}$ respectively). 

The probability of error attained by a scheme is defined to be the maximal probability that either Alice or Bob fail to exactly simulate the original transcript, where the maximum is taken over all possible Markovian protocols. A sequence of schemes with rate at least $R$ and error probability approaching zero is said to achieve the rate $R$. The capacity for Markovian protocols over BSC's is the supremum over all such achievable rates, and is denoted by $C_{\textrm{Markov}}(\eps)$. Note that $C_{\textrm{Markov}}(\eps)$ cannot exceed the one-way Shannon capacity of the BSC, i.e.,  
\begin{align}\label{eq:shannon_capacity}
C_{\textrm{Markov}}(\eps) \leq 1-h(\eps), 
\end{align}
as this is the maximal achievable rate for the special case of non-interactive protocols. 
In the sequel we prove that $C_{\textrm{Markov}}(\eps) = 1-h(\eps)$ by giving an explicit capacity achieving coding scheme.

\section{Building blocks for the capacity achieving schemes}
We start by noting that the Markovian transmission functions ${f}_i(\cdot),{g}_i(\cdot)$, are binary functions that map a single input bit to a single output bit. There are only four such functions, $\mu_1$, $\mu_2$, $\mu_3$, $\mu_4$ which are elaborated in Table~\ref{table:markfunctions}.
\begin{table}
\begin{align}
\begin{tabular}{|c||*{4}{c|}}
\hline
&$\mu_1$: & $\mu_2$: & $\mu_3$: & $\mu_4$: \\
$Y$ & $X=Y+ 0$ & $X=Y+ 1$ & $X=0$ & $X=1$ \\
\hline\hline
$0$ & $0$ &  $1$ & $0$ & $1$\\
\hline
$1$ & $1$ &  $0$ & $0$& $1$\\
\hline
\end{tabular}
\end{align}
\caption{Binary first-order Markovian transmission functions\label{table:markfunctions}}
\end{table}
We observe that $\mu_1$ and $\mu_2$ are \textit{additive}, i.e. $X=Y+ \xi$ and $\xi$ is either $0$ or $1$. $\mu_3$ and $\mu_4$ are constant functions, namely, the output is $0$ or $1$ regardless the input. In the sequel we refer to the  $\mu_3$ and $\mu_4$ as ``stuck functions". 
The capacity achieving scheme is based on the properties of these functions and two simple principles termed \textit{non-interactive simulation}, and \textit{block-wise interactive coding} defined herein.

\subsection{Non-Interactive Simulation} 
We observe that if Bob knows all Alice's transmission functions, he can simulate the entire transcript off-line (i.e. without using the channel). Then, he can send his part of the transcript to Alice, which can in-turn simulate her part. 

Since there are only four transmission functions, Alice needs at most two bits for the representation of each function. In addition, if Bob has some side information that reduces the number of functions used by Alice, she can use less than two bits (e.g. if Bob knows that Alice's functions are additive, she can use only a single bit).

A simple non-interactive (and not capacity achieving) scheme can be built as follows: Alice sends Bob all here functions, and then Bob simulates the transcript off-line and feed his part of the protocol to Alice. Alice, needs to send $2n$ bits and Bob needs $n$ bits. Both parties can use capacity achieving block codes for their transmissions so a rate of $2/3$ the Shannon capacity can be trivially achieved. 

\subsection{Block-Wise Interactive Coding}
Let us start with the following simple example. Suppose the transcript is
\begin{align}
A_1, B_1, A_2, B_2, A_3, B_3, A_4, B_4.
\end{align}
The length of the transcript is eight, and both Alice and Bob transmit four times.
The order of speakers is Alice, Bob, Alice, Bob etc. By the Markovity assumption, $B_1$ is a Markovian function of $A_1$, $A_2$ is a Markovian function of $B_1$ and so on.

Now, assume that Alice can explicitly calculate the value of $A_3$ without having to simulate the protocol. For example, assume that $A_3$ is a stuck position (as explained in Section~\ref{sec:randomgame}) 
of that Alice knows $B_2$ before simulating the protocol, and can therefore calculate $A_3$ (as explained in Section~\ref{sec:regularscheme}). Since by the Markovity assumption, $B_3$ is a function of $A_3$ alone (and not of previous transmissions), it is possible to simulate the protocol in the following order:
\begin{align}
\underbrace{A_1, A_3}_{\text{Alice}}, \underbrace{B_1, B_3}_{\text{Bob}},\underbrace{A_2, A_4}_{\text{Alice}},\underbrace{ B_2,B_4}_{\text{Bob}}
\end{align}
where the identity of the speaker is written under the brace. Namely, Alice sends two consecutive transmissions, then Bob sends two consecutive transmissions etc.

This notion is generalized in Table~\ref{table:blocktrans}. We assume that Alice knows her transmission at the beginning of every block ($A_1$, $A_{m+1}$, etc.). Then the protocol can simulated in \textit{vertical blocks} according to the columns of the table. For example, the block contains $A_1, A_{m+1},...,A_{n-m+1}$ (colored blue in the table) sent from Alice to Bob, the second vertical block is $B_1, B_{m+1},...,B_{n-m+1}$ (colored red in the table) sent from Bob to Alice etc.

\begin{table}
\begin{center}
	\begin{tabular}{|c||l|l|l|l|l|l|l|}
		\hline block \# &
		\multicolumn{7}{ |c| }{transmissions} \\
		\hline\hline
		$1$ & \textcolor{blue}{$\Atrans_{1}$} & \textcolor{red}{$\Btrans_{1}$} & $\Atrans_{2}$ & $\Btrans_{2}$&$\ldots$&$\Atrans_{\blockLength}$ & $\Btrans_{\blockLength}$\\
		\hline
		$2$ & \textcolor{blue}{$\Atrans_{\blockLength+1}$} & \textcolor{red}{$\Btrans_{\blockLength+1}$} & $\Atrans_{\blockLength+2}$ & $\Btrans_{\blockLength+2}$&$\ldots$&$\Atrans_{2\blockLength}$ & $\Btrans_{2\blockLength}$\\
		\hline
		$\vdots$&$\vdots$        &               &               &              &      &             & $\vdots$\\
		\hline
		$n/m$ & \textcolor{blue}{$\Atrans_{\protLength-\blockLength+1}$} & \textcolor{red}{$\Btrans_{\protLength-\blockLength+1}$}& $\Atrans_{\protLength-\blockLength+2}$ & $\Btrans_{\protLength-\blockLength+2}$&$\ldots$&$\Atrans_{\protLength}$ & $\Btrans_{\protLength}$\\
		\hline\hline
		vertical block \#
		 &\textcolor{blue}{$1$}&\textcolor{red}{$2$}&$3$&$4$&$\cdots$&$2\blockLength-1$&$2\blockLength$\\ \hline
	\end{tabular}
\end{center}
\caption{Block-wise interaction\label{table:blocktrans}}
\end{table}

Note that now, Alice and Bob communicate in vertical blocks of length $n/m$, but the transmissions within every block are independent and hence non-interactive. Therefore a block code can be used for every block.
Also note, that the interaction is being carried out between consecutive blocks, which explains the term  \textit{block-wise interactive}. In addition, if the length of the vertical block code $\nblocks$, is \textit{large enough} and the number of vertical blocks $2\blockLength$, is \textit{not too large}, the probability of error in one (or more) block codes can go to zero with $\protLength$. This notion is made rigorous in the following basic lemma.
\begin{lemma}\label{lemma:blockwise} Suppose $l(n)$ independent blocks of $b(n)$ bits, are to be conveyed over BSC($\BSCprob$) at rate $\Rb<\CBSC$ and $n\to \infty$. Then, if $b(n) =\Omega(\log(l(n)))$, the probability of error in the decoding of one or more blocks is $o(1)$. 
\end{lemma}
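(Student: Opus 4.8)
The plan is the standard one: place a good block code on each of the $l(n)$ blocks and union-bound the total decoding error, the only point requiring care being to pick the code rate so that the per-block error decays in $b(n)$ fast enough to absorb the factor $l(n)$. First I would invoke the error-exponent (reliability) bound for the BSC: for every fixed rate $\Rb<\CBSC$ there is a constant $E(\Rb)>0$ --- for instance Gallager's random-coding exponent $E_{\mathrm r}(\Rb)$ --- such that for every blocklength $N$ there exists a length-$N$ code with $2^{N\Rb}$ messages whose maximal error probability over BSC($\BSCprob$) is at most $2^{-N E(\Rb)}$. Applying this to each block with $N=\lceil b(n)/\Rb\rceil$ channel uses (so that $2^{N\Rb}\ge 2^{b(n)}$ messages are available and the operating rate stays below $\CBSC$), each block is decoded correctly except with probability at most $2^{-b(n)E(\Rb)/\Rb}$, and hence, by the union bound, the probability that one or more of the $l(n)$ blocks is decoded in error is at most $l(n)\cdot 2^{-b(n)E(\Rb)/\Rb}$.

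It remains to make this $o(1)$. Write $b(n)\ge c\log_2 l(n)$ for all large $n$, where $c>0$ is the constant implicit in the hypothesis $b(n)=\Omega(\log l(n))$; we may also assume $l(n)\to\infty$, since otherwise there are only boundedly many blocks and each individual error probability already vanishes as $b(n)\to\infty$. The key fact is that the per-message-bit exponent $E(\Rb)/\Rb$ can be made as large as we wish by lowering the rate: $E(\cdot)$ is continuous and strictly positive on $[0,\CBSC)$, with $E(0)>0$ (the cutoff-rate exponent of the BSC), so $E(\Rb)$ stays bounded below by a positive constant as $\Rb\downarrow 0$ while the denominator tends to $0$, giving $E(\Rb)/\Rb\to\infty$. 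Fixing any rate $\Rb\in(0,\CBSC)$ small enough that $E(\Rb)/\Rb\ge 2/c$ --- equivalently, observing that it suffices for the constant hidden in the $\Omega(\cdot)$ to be large relative to $\Rb/E(\Rb)$ --- the per-block error is at most $2^{-(2/c)\,c\log_2 l(n)}=l(n)^{-2}$, so the total error is at most $l(n)\cdot l(n)^{-2}=1/l(n)=o(1)$.

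I expect the main (really, the only) obstacle to be this interplay between the hidden constant in $b(n)=\Omega(\log l(n))$ and the error exponent: a code run at a fixed rate close to capacity has a small exponent and need not tolerate a small constant $c$. The fix is to trade rate for reliability, running the inner blocks at a small but constant rate below $\CBSC$; this inflates the number of channel uses devoted to these blocks only by a constant factor, which is harmless for the lemma as stated (its rate implications are accounted for at the points where it is applied). The remaining ingredients --- continuity and positivity of the random-coding exponent strictly below capacity, existence of codes meeting it for every blocklength, and the union bound --- are classical.
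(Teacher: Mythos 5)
Your proof follows essentially the same route as the paper's: Gallager's random-coding exponent for each block followed by a union bound over the $l(n)$ blocks. The one place you diverge is instructive. You correctly notice that with only $b(n)\ge c\log l(n)$ the union bound $l(n)\cdot 2^{-b(n)E_{\mathrm r}(\Rb)/\Rb}$ need not vanish when $c\,E_{\mathrm r}(\Rb)/\Rb<1$, i.e.\ when the hidden constant is small and $\Rb$ is close to $\CBSC$; the paper's own proof silently steps over this point (its final line reads $\exp(-\Omega(1))=o(1)$, which is not a valid implication --- one needs the exponent to diverge, not merely to be bounded away from $0$). However, your fix --- lowering $\Rb$ until $E_{\mathrm r}(\Rb)/\Rb\ge 2/c$ --- changes the quantifier: the lemma asserts the conclusion for \emph{every} $\Rb<\CBSC$, and it is applied in the paper with $\Rb$ arbitrarily close to $\CBSC$, since that is exactly what makes the overall scheme capacity-achieving. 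Running the vertical blocks at a small fixed rate would not be ``harmless'' there; it would cap the simulation rate at that small constant. The resolution is that in every application of the lemma one actually has $b(n)=\omega(\log l(n))$ (e.g.\ $b(n)\ge n^{1/4}$ while $l(n)\le 2\sqrt{n}$, so $\log l(n)=O(\log n)$), under which the exponent $-b(n)E_{\mathrm r}(\Rb)/\Rb+\ln l(n)$ tends to $-\infty$ for any fixed $\Rb<\CBSC$ and the union bound is $o(1)$ without touching the rate. So your mechanics are sound and your diagnosis of the weak point is sharper than the paper's, but the correct repair is to strengthen the hypothesis to $b(n)=\omega(\log l(n))$ (or to fix the implied constant appropriately), not to shrink the rate.
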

\begin{proof}
	The proof is by straightforward implementation of Gallager's random coding error exponent and the union bound. Due to  \cite{GallagerIT}[Theorem 5.6.4], the probability of decoding error in a single block is upper bounded by:
	\begin{align}
	P(\text{block error})\leq \exp\left(-\frac{b(n)}{\Rb}E_r(\Rb)\right)
	\end{align}
	where $E_r(\Rb)$ (the error exponent) is strictly positive and independent of $n$ for any $0\leq \Rb<\CBSC$ and ${b(n)}/{\Rb}$ is the length of the block code. Now, having $l(n)$ independent such blocks, the probability of error in one or more blocks can be upper bounded using the union bound:
	\begin{align}
	P(\text{error in any block})&\leq l(n) \exp\left(-\frac{b(n)}{\Rb}E_r(\Rb)\right)\\
	&=\exp\left(-b(n)\frac{E_r(\Rb)}{\Rb}+\ln l(n)\right)\\
	&\stackrel{(a)}{=}\exp(-\Omega(1))\\
	&=o(1)
	\end{align}
	where $(a)$ is by the assumption that $b(n) =\Omega(\log(l(n)))$.
\end{proof}

Now assume that a vertical block code of rate $R_b$ is used, that $l(n)=2m$, and that $b(n) l(n)=2n$. The total number of transmissions is therefore $l(n)\frac{b(n)}{R_b}=\frac{2n}{R_b}$, so by \eqref{eq:ratedef}, the total rate of the scheme is $R=\Rb$. If $l(n)$ and $b(n)$ satisfy the condition in Lemma~\ref{lemma:blockwise}, then reliable simulation in any rate below Shannon's capacity in possible.

 We now provide two schemes that achieve the Shannon capacity using vertical coding. The schemes differ in the way the separation into blocks is implemented.
 
\section{First Capacity Achieving Scheme : a random partition into blocks\label{sec:randomgame}}
In the first step of the scheme, Alice partitions her transmission functions into segments, possibly of non-uniform length, all starting with a stuck function. Then she conveys the partitioning to Bob. This is done as follows:
\begin{enumerate}
	\item Regard $A_1$ as a stuck function, and repeat sequentially: suppose the last block so far starts with a stuck function at time $j$. Look for the first stuck function at time $k\geq j+\sqrt{n}$. Then the next block will start at time $k$. The length of each block created in this process is at least $\sqrt{n}$. The number of blocks is $p\leq \sqrt{n}$. 
	\item Alice communicates to Bob the starting point of every block. This can be regarded as a binary sequence of length $n$ with at most $\sqrt{n}$ ones. It can be described to Bob using no more than $(\sqrt{n}+1)\log(n)$, where the total number of ones, and the location of every one in the sequence are described using $\log(n)$ bits. We note that while more efficient compression methods can be considered, the overhead inflicted by this method is negligible when normalizing by $n$ and taking $n$ to infinity.
\end{enumerate}
	
After performing these steps, Alice and Bob are coordinated with regard to the partition into blocks as depicted in Fig.~\ref{fig:transblocks}. 
The figure divides the original transcript into two parts: Part A contains the first$ \sqrt{n}$ bits in each block and Part B contains all the rest, which can be of varying lengths. Note that by definition, there are no stuck functions in Part B. We will use a different communication protocol for each part.
There are two cases which are determined by the value of $p$.
\begin{enumerate}
	\item If $p>n^{\frac{1}{4}}$ we use block-wise interactive coding for Part A and non-interactive simulation for Part B.  In Part A we have $l(n)=2\sqrt{n}$ blocks each having $n(n)=p>n^{\frac{1}{4}}$ information bits. So, the condition of Lemma~\ref{lemma:blockwise} is satisfied and Part A can be reliably simulated at any rate below Shannon's capacity.	
	
	As for the non-interactive simulation, it is assured by construction that there are no stuck functions in Part B. Therefore, Alice can describe the transmission functions of Part B using a single bit each. These bits can be appended to the last block Alice sends in the block-wise interaction phase, and due to the large block length can be reliably decoded. Then, Bob can simulate his transmissions of Part B off-line and send them to Alice.	
	
	\item If $p<n^{\frac{1}{4}}$ non-interactive simulation can be used for both Part A and Part B. For Part A, two bits per transmission function are required and in Part B only a single bit per transmission function is required. The number of bits Alice needs in order to represent her functions is thus:
	\begin{align}
	\underbrace{2\cdot\sqrt{n}p}_{\text{Part A}}+\underbrace{1\cdot(n-\sqrt{n}p)}_{\text{Part B}}=
	n\left(1+\frac{p}{\sqrt{n}}\right)\leq 	n\left(1+n^{-\frac{1}{4}}\right).
	\end{align}
This term approached $n$ as $n\to \infty$, which implies that asymptotically Alice needs $n(1+o(n))$ in order to represent her functions. Bob then needs $n$ bits to send his part of the transcript (after simulating it off-lines). Both Alice's and Bob's transmissions can be sent in large blocks using capacity achieving codes, thus 
\begin{align}
\lim_{n\to\infty}\tilde{n}=\frac{2n}{1-h(\eps)}
\end{align}
which by the definition in \eqref{eq:ratedef} implies that $R$ is Shannon's capacity for $\text{BSC}(\eps)$.

\end{enumerate}

\begin{figure}
\begin{center}
	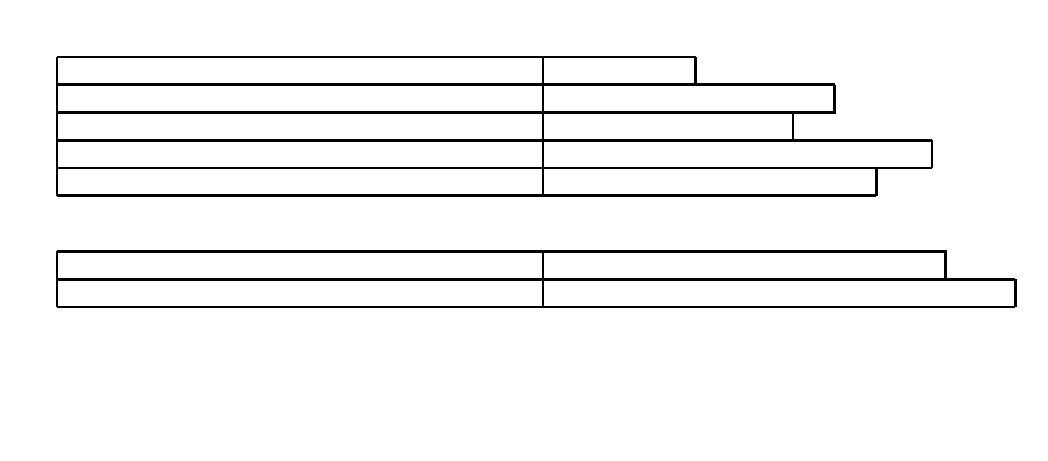
\end{center}
\caption{Division of the transmission functions into blocks \label{fig:transblocks}}
\end{figure}

\section{Second capacity achieving scheme: regular partition into blocks and prediction of the first transmission \label{sec:regularscheme}}
In this scheme, the blocks are constructed to be all equal length $m$ exactly as in Table~\ref{table:blocktrans}. 
We now give an algorithm that enables Alice to calculate her first transmission in every block efficiently, and without having to simulate the protocol. 
More explicitly, we give an algorithm, that for every block $i\in\{1,...,n/m\}$, comprising the transmissions $A_{(i-1)m+1},B_{(i-1)m+1},...,A_{i\times m},B_{i\times m}$, enables Alice to calculate Bob's last transmission 
$B_{i\times m}$, provided that $B_{(i-1)\times m}$ is given, and with the cost of exchanging $O(\log m)$ between her and Bob (we set $B_0=0$ without loss of generality). Once Alice knows $B_{(i-1)\times m}$, she can calculate 
her following transmission $A_{(i-1)\times m+1}$ using her transmission function.

We note that this algorithm can be regarded as a special pointer jumping game. But in contrast to the classic setting 
\cite{nisan1993rounds}, in which the alphabet size of the players goes to infinity, the number of rounds is fixed and the mapping functions are fixed, in our case the size of the alphabet is two, the number of rounds goes to infinity and the functions change over time.

The algorithm is implemented in the following steps for every block $i$:
\begin{enumerate}
	\item Bob conveys to Alice $s_B$, the index of the latest stuck function in the block. Namely, if there exists at least one value ${1\leq j\leq m}$, for which either $B_{(i-1)m+j}=0$ or $B_{(i-1)m+j}=1$, then:
	\begin{align}
	s_B = \max_{1\leq j\leq m} \text{s.t}\quad \{B_{(i-1)m+j}=0 \quad \text{or}\quad  B_{(i-1)m+j}=1\},
	\end{align}
	otherwise, $s_B=0$. In addition, Bob sends $v_B$, which is the value of his last stuck function, defined as $v_B=B_{s_B}$ if $s_B>0$ and $0$ otherwise.	
	All in all, $\lceil\log(m+1)\rceil$ are required for the description of $s_B$ and a single bit is required for $v_B$. So $\lceil\log(m+1)\rceil+1$ bits should be transmitted from Bob to Alice at this step.
	\item Alice conveys to Bob $s_A$, the index of the latest stuck function in her block. Namely, if there exists at least one value ${1\leq j\leq m}$, for which either $A_{(i-1)m+j}=0$ or $A_{(i-1)m+j}=1$, then:
	\begin{align}
	s_A = \max_{1\leq j\leq m} \text{s.t}\quad \{A_{(i-1)m+j}=0 \quad \text{or}\quad  A_{(i-1)m+j}=1\}.
	\end{align}
	if no such value exists, then $s_A=0$. This step requires the transmission of $\lceil\log(m+1)\rceil$ from Alice to Bob.
	\item Having the knowledge of $s_A$ and $s_B$ Alice and Bob can each calculate $s=\max(s_A,s_B)$, the index of the latest mutual stuck function.   
\end{enumerate}
The next steps rely on the fact that after $s$, there are no more stuck functions at either party. So for $s\leq j \leq m$ Bob's transmission functions are $B_{(i-1)m+j}=A_{(i-1)m+j}+b_{(i-1)m+j}$ and Alice's transmission functions are
$A_{(i-1)m+j}=B_{(i-1)m+j-1}+a_{(i-1)m+j}$ where $a_{(i-1)m+j},b_{(i-1)m+j}\in\{0,1\}$.

The next steps are divided into two cases, depending if $s=s_A$ or $s=s_B$. If $s=s_A$ (the latest mutual stuck function is at Alice's) and $s>0$ then do:
\begin{enumerate}
	\item Bob sends Alice the \textit{parity} of his transmission from $(i-1)m+s$ to ${im}$. Namely:
	\begin{align}
	p_B = \sum_{j={(i-1)m+s}}^{im} b_j
	\end{align}
	where $\sum$ and $+$ are both over $\mathbb{GF}(2)$.
	This process requires conveying a single bit from Bob to Alice.
	\item Alice can now calculate $B_m$ as follows:
	\begin{align}
	B_m = A_{s}+p_B+\sum_{j={(i-1)m+s+1}}^{im} a_j
	\end{align}	
\end{enumerate}
In the complementary case ($s=s_B$, including $s=0$) then
\begin{enumerate}
	\item Bob sends Alice the \textit{parity} of his transmission from $(i-1)m+s+1$ to ${im}$. Namely:
	\begin{align}
	p_B = \sum_{j={(i-1)m+s+1}}^{im} b_j
	\end{align}
	\item In the case where $s>0$ Alice can readily calculate $B_m$ as follows:
	\begin{align}\label{eq:Bm1}
	B_m = v_B+p_B+\sum_{j={(i-1)m+s+1}}^{im}a_j
	\end{align}
	In the case where $s=0$, $B_m$ corresponds to the following equation:	
	\begin{align}\label{eq:Bm2}
	B_m = B_{(i-1)m}+p_B+\sum_{j={(i-1)m+1}}^{im}a_j
	\end{align}
	We note that except the case where $i=1$ (for which we set $B_0=0$) Alice cannot calculate this value without knowing $B_{(i-1)m}$, and the explicit calculation will take place later.
\end{enumerate}

We note that in every Bob and Alice exchanged $2\lceil \log (m+1) \rceil+1$ bits altogether. The total number of bits is therefore:
\begin{align}
N=\frac{n}{m}(2\lceil \log (m+1) \rceil+1)
\end{align}
We can now set $m=\sqrt{n}$ and observe that
\begin{enumerate}
\item $N=o(n)$
\item Since by construction, the transmissions of every block are independent, they can be done in parallel. I.e. can implement the calculation in three rounds: Bob to Alice ($s_B$ and $v_B$ for all blocks), Alice to Bob ($s_A$ for all blocks), Bob to Alice ($p_B$ for all blocks). Each of there rounds contains $O(n/m\log m)$ bits, so it can be transmitted reliably using a block code.
\end{enumerate}

Finally, Alice needs to calculate $B_{im}$ for $1\leq i n/m-1$. She can do that inductively from $i=1$ to $i=n/m-1$ using \eqref{eq:Bm1} or \eqref{eq:Bm2} (according to the value of $s$ in the block). Once Alice knows $B_{(i-1)m}$ for all $i=1,...,n/m$, she can calculate her first transmission in every block (i.e. $A_{(i-1)m+1}$ for every $i=1,...,n/m$). Therefore, the protocol can be simulated using block-wise interaction with $l(n)=2\sqrt{n}$ blocks of length $b(n)=\sqrt{n}$. So, the condition in Lemma~\ref{lemma:blockwise} is satisfied and reliable simulation in any rate below Shannon's capacity in possible.

\section{Extension to higher order models}
The binary first order Markovian can be extended in two ways: either considering a binary model with a larger memory, or by leaving the Markovian model of the same order and extending the alphabet size. While both extensions are essentially similar, we prefer to discuss the second one. 

We note that our coding schemes presented in Sections~\ref{sec:randomgame} and \ref{sec:regularscheme} do not extend naturally to all possible protocols of a higher order, they are still applicable for \textit{most} protocols of a finite higher order.

To explain this point, assume that the transmission functions are drawn uniformly i.i.d over the space of all transmission functions. We modify the algorithms as follows:
\begin{enumerate}
\item For the scheme in Section~\ref{sec:randomgame}, always look for the next stuck function within an \textit{opportunity window} of size $o(\sqrt{n})$ (for example, set the length of the opportunity window to be $n^{1/4}$). If the procedure does not succeed, declare failure.
\item For the scheme in Section~\ref{sec:regularscheme}, look for the last stuck function in a block only within the last $o(\sqrt{n})$ (for example, $n^{1/4}$) time indexes of the block. Then, Bob describes all his functions after this time point, until the end of the block. The description therefore requires
\begin{align}
N=\frac{n}{\sqrt{n}}\left(2\lceil \log (\sqrt{n}+1) \rceil+Kn^{1/4}\right)
\end{align}
bits. $K$ is the number of bits required to describe a single function, and is constant, therefore $N=o(n)$ and the number of bit required for the calculation of the initial transmission in every block is negligible.
If a stuck function was not found inside the opportunity window (of size $o(\sqrt{n})$) at least for one block, failure is declared.
\end{enumerate}

It is not difficult to show, that if the transmission functions are drawn uniformly and i.i.d, then the procedure described will be be successful with high probability (or, for almost all protocols).

\bibliographystyle{IEEEbib}
\bibliography{bibtex_references}

\end{document}